\newtheorem{lemma}{Lemma}
\newtheorem{remark}{Remark}
\newtheorem{definition}{Definition}
\newtheorem{thm}{Theorem}
\crefname{thm}{Theorem}{Theorems}
\newcommand{\ci}{\mathcal{C}^I}
\newcommand{\F}{\mathbb{F}_q}
\newcommand{\cf}{\mathcal{C}^F}
\newcommand{\gf}{\mathbf{G}^F}
\newcommand{\gi}{\mathbf{G}^I}
\newcommand{\rf}{r^F}
\newcommand{\ri}{r^I}
\newcommand{\kf}{k^F}
\newcommand{\ki}{k^I}
\newcommand{\nf}{n^F}
\newcommand{\nii}{n^I}
\newcommand{\rank}{\mathrm{rank}}
\newcommand{\Gi}{
\begin{bmatrix}
        \mathbf{I}_\ell&&&\mathbf{B}_{1,1}&\cdots&\mathbf{B}_{1,r^{I}}\\
        &\ddots&&\vdots&\ddots&\vdots\\
        &&\mathbf{I}_\ell&\mathbf{B}_{\ki,1}&\cdots&\mathbf{B}_{\ki,r^{I}}\\
    \end{bmatrix}}
\newcommand{\Gf}{
\begin{bmatrix}
        \mathbf{I}_\ell&&&\mathbf{C}_{1,1}&\cdots&\mathbf{C}_{1,\rf}\\
        &\ddots&&\vdots&\ddots&\vdots\\
        &&\mathbf{I}_\ell&\mathbf{C}_{\kf,1}&\cdots&\mathbf{C}_{\kf,\rf}\\
    \end{bmatrix}}
\newcommand{\B}{
\begin{bmatrix}
        \mathbf{B}_{1,1}&\cdots&\mathbf{B}_{1,r^{I}}\\
        \vdots&\ddots&\vdots\\
        \mathbf{B}_{\ki,1}&\cdots&\mathbf{B}_{\ki,r^{I}}\\
    \end{bmatrix}}
\newcommand{\C}{
\begin{bmatrix}
        \mathbf{C}_{1,1}&\cdots&\mathbf{C}_{1,\rf}\\
        \vdots&\ddots&\vdots\\
        \mathbf{C}_{\kf,1}&\cdots&\mathbf{C}_{\kf,\rf}\\
    \end{bmatrix}}
\begin{document}

\title{Lower Bounds on Conversion Bandwidth for MDS Convertible Codes in Split Regime}

\author{
Lewen Wang and Sihuang Hu
\thanks{The authors are with the State Key Laboratory of Cryptography and Digital Economy Security, the Key Laboratory of Cryptologic Technology and Information Security, Ministry of Education, and the School of Cyber Science and Technology, Shandong University, Qingdao, Shandong 266237, China (e-mail: lewenwang3@gmail.com; husihuang@sdu.edu.cn). Corresponding author: Sihuang Hu.}
}

\date{}

\maketitle
\begin{abstract}
We propose several new lower bounds on the bandwidth cost of MDS convertible codes using a linear-algebraic framework. The derived bounds improve previous results in certain parameter regimes and match the bandwidth cost of the construction proposed by Maturana and Rashmi (2022 IEEE International Symposium on Information Theory) for $r^F\le r^I\le k^F$, implying that our bounds are tight in this case.
\end{abstract}

\quad\textbf{Keywords:} Bandwidth cost, convertible codes, distributed storage, MDS codes, split regime.

\section{Introduction}
Erasure codes are widely used in distributed storage systems as they provide fault tolerance with smaller storage overhead compared to replication~\cite{weatherspoon02}. 
In a typical system, a file  is  divided  into $k$ data symbols  and then encoded  into $n$   symbols using an $[n,k]$ erasure code. This encoding fixes the fault-tolerance level of the system.   
However, large-scale storage systems, such as those operated by Google and other cloud providers, contain storage nodes whose failure probabilities vary over time. 
Prior work by Kadekodi \emph{et al.}~\cite{kadekodi2019cluster} has demonstrated that dynamically adjusting the code parameters to match the observed changes in node failure rates can lead to substantial savings in storage overhead. 
For instance, tailoring $n$ and $k$ to the current failure environment may reduce storage requirements by over 11–44\%.

Therefore, it is important to support efficient conversion of commonly used codes---particularly MDS codes, which provide maximal reliability for a given storage overhead---so that the code can adapt to changing system reliability requirements.  
However, naively adjusting the code rate requires fully re-encoding all stored data, which is both computationally and I/O intensive.
To address this issue, Maturana and Rashmi~\cite{maturana2020convertible,maturana2022convertible} introduced the framework of \emph{convertible codes}, which allows an initial code with  given parameters to be converted efficiently into a final code with  different parameters, avoiding full re-encoding. The conversion process transforms codewords in the initial code into codewords in the final code while preserving the original information.
An \emph{MDS convertible code} refers to a convertible code in which both the initial and final codes are MDS codes.

There are two fundamental regimes of code conversion: the \emph{merge regime}, in which multiple initial codewords are merged into a single final codeword, and the \emph{split regime}, in which one initial codeword is divided into multiple final codewords.
These two regimes capture the essential trade-offs in the general convertible code framework.

The efficiency of a conversion process is typically measured by two metrics.  
The first is the \emph{access cost}, defined as the total number of coded symbols accessed during conversion.  
Maturana and Rashmi established  tight lower bounds on the access cost of MDS convertible codes  in  both   merge and split regimes and proposed access-optimal constructions that achieve these bounds~\cite{maturana2020convertible,maturana2022convertible,Maturana2020isit}.  
Subsequent works~\cite{maturana2020convertible,Chopra2024OnLF,Kong2024} have focused on reducing the field size required for such constructions, while others have extended the analysis to different regimes and code classes~\cite{Kong2024,Maturana2023isit,Ge2024MDSGC,Shi2025BoundsAO,Ge2025LocallyRC,RKSVK25,Gruica2026ConvertibleCF,Zhang25}.

The second performance metric is the \emph{bandwidth cost}, which measures the total amount of data transferred between nodes during conversion.  
In~\cite{Maturana2023}, Maturana and Rashmi derived a tight lower bound on the bandwidth cost of MDS convertible codes in the merge regime and proposed a bandwidth-optimal construction that attains this bound.  
For the split regime, Maturana and Rashmi~\cite{Maturana2022BandwidthCO} proposed a lower bound on bandwidth cost based on an information-flow model. They also introduced a conjecture and left the problem of determining the minimum bandwidth cost as an open question.
Very recently, Singhvi et al.~\cite{singhvi2025tight} employed an information-theoretic framework to derive lower bounds on the bandwidth cost of MDS convertible codes in the split regime. 
We note that their methodology is distinct from the one pursued in this work.


\subsection{Our Contribution}

In this paper, we establish lower bounds on the bandwidth cost of MDS convertible codes with linear conversion procedures in the split regime.  
The main contributions are summarized as follows:
\begin{itemize}
    \item \textbf{Linear-algebraic reformulation.}  
    We introduce a vector-space perspective on code conversion by identifying an inclusion relation between specific column spaces of the generator matrices of the initial code and the final code.  
    This reformulation converts the problem of minimizing bandwidth cost into a linear-algebraic optimization problem.

    \item \textbf{Closed-form lower bounds.}  
    Building on this inclusion relation, we derive explicit closed-form lower bounds on the total read bandwidth by solving a family of linear programs.  
    The resulting expressions are formally presented in \cref{bound0,bound1,bound_3}.

    \item \textbf{Comparison with prior work.}  
    Our framework removes the assumption of uniform data download across unchanged and retired symbols in~\cite{Maturana2022BandwidthCO}. Moreover, our lower bounds are strictly tighter than those in Theorem~4 of~\cite{Maturana2022BandwidthCO} for most parameter regimes. In addition, the bound in~\cref{bound1} coincides with the bandwidth cost achieved by their construction for $r^F \le r^I \le k^F$, which proves that our bound is \emph{tight} in this case. A detailed comparison between our results and those in Theorem~4 of~\cite{Maturana2022BandwidthCO} is presented in Section~\ref{conclusion}.
\end{itemize}

\subsection{Preliminaries}
We first introduce some basic definitions and notations.
Let $\mathbb{F}_q$ be a finite field. 
An $[n,k,\ell]$ \emph{array code} $\mathcal{C}$ is a subspace of $\mathbb{F}_q^{n\ell}$ of dimension $k\ell$. 
Each codeword $\mathbf{c}$ $\in \mathcal{C}$ is represented as 
\[
\mathbf{c} = (\mathbf{c}_1, \dots, \mathbf{c}_n)^{T}, \quad 
\mathbf{c}_i = (c_{i,1}, \dots, c_{i,\ell}) \in \mathbb{F}_q^\ell, \quad i \in [n].
\]
In the following context, we refer to $\mathbf{c}_i$ as a (codeword) \emph{symbol}, and each scalar $c_{i,j}$ as a \emph{subsymbol}. We call an $[n,k,\ell]$ array code $\mathcal{C}$ an \emph{MDS array code} if any $k$ out of $n$ symbols suffice to recover the whole codeword.

A generator matrix $\mathbf{G}$ of  $\mathcal{C}$  is a $k\ell \times  n\ell$ matrix over $\F$ whose rows form a basis for the code. 
The generator matrix is said to be systematic if it has the block form 
$$\mathbf{G}=[\mathbf{I}_{k\ell}\,|\,\mathbf{A}],$$
where $\mathbf{I}_{k\ell}$ is the $k\ell\times k\ell$ identity  matrix and  $\mathbf{A}$ is a $k\ell\times r\ell$ matrix. 
 For a message vector $\mathbf{m}\in\F^{k\ell}$, the encoded codeword under $\mathbf{G}$ is denoted by $\mathcal{C}(\mathbf{m})=\mathbf{m}\mathbf{G}$. 

We recall the definition of MDS convertible codes in the split regime.
\begin{definition}[MDS Convertible Codes~\cite{maturana2022convertible}]\label{def_convertible_code}
    Let $\lambda$ be an integer with $\lambda\ge 2$. An $[n^{I},k^{I}=\lambda\kf;n^{F},k^{F};\ell]$  MDS convertible code over a finite field $\F$ can be defined as 
    \begin{itemize}
        \item A pair of   codes $(\ci,\cf)$ where  $\ci$ is an initial  MDS array code with parameter $[\nii,\ki,\ell]$ and  $\cf$ is a final  MDS array code with parameter $[\nf,\kf,\ell]$.
        \item A conversion procedure $\mathcal{T}$ with input $\{\ci(\mathbf{m}):\mathbf{m}=(\mathbf{m}_1,\cdots,\mathbf{m}_\lambda)\}$  and output $\{\cf(\mathbf{m}_i):i\in [\lambda]=\{1,\cdots,\lambda\}\}$ for all  $\mathbf{m}_i\in \F^{\kf\ell}$.
    \end{itemize}
\end{definition}

During conversion, each symbol of the initial and final codewords  belongs to one of three categories:
(1) Unchanged symbols: appear in both the initial and final codewords. 
(2) Retired symbols: appear only in the initial codeword.
(3) New symbols: appear only in the final codewords.
For $i\in [\lambda]$, let  $N_i$ be the set of indices of new symbols  in the $i$-th final codeword.

For each initial symbol index $j\in[n^I]$ let $D_j\subseteq[\ell]$ denote the set of subsymbol indices that are read from symbol $\mathbf{c}_j$ during conversion, and write $\bar{D}_j=[\ell]\setminus D_j$ for the unread subsymbol indices.  
Define $\beta_i=|D_i|$ as the number of subsymbols read from symbol $\mathbf{c}_i$.
Then,
\begin{itemize}
    \item The  read bandwidth cost  is  $R=\sum\limits_{i=1}^{\nii}\beta_i$. 
    \item The write bandwidth cost is   $W=(\sum\limits_{i=1}^\lambda|N_i|)\ell$.
    \item The total bandwidth cost is $R+W$.
\end{itemize}

Intuitively, having more unchanged symbols leads to lower write bandwidth cost.
A convertible code is  stable if it maximizes the number of unchanged symbols for its parameter set.
\begin{definition}[Stable Convertible Code\cite{maturana2022convertible}]\label{def_stable_convertible}
    An $[n^{I},k^{I};n^{F},k^{F};\ell]$  MDS convertible code is said to be \emph{stable} if it uses the maximum number of unchanged symbols over all  MDS convertible codes with the same parameters.
\end{definition}
As for an MDS convertible code in the split regime, we have the following result.
\begin{lemma}\label{lem_stable}
    Let $(\mathcal{C}^I,\mathcal{C}^F)$ be an $[n^{I},k^{I};n^{F},k^{F};\ell]$ MDS convertible code in the split regime with $k^I=\lambda k^F$. Then the number of unchanged symbols is at most $k^I$, and this bound is achievable.
\end{lemma}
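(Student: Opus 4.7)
The plan is to establish the upper bound by a linear-algebraic argument on the generator matrix of $\ci$, and then exhibit a simple construction that matches the bound.

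I would begin by fixing a generator matrix $\gi\in\F^{\ki\ell\times\nii\ell}$ of $\ci$, organized into $\lambda$ horizontal row blocks $\gi_1,\dots,\gi_\lambda$, where the $j$-th block is the $\kf\ell\times\nii\ell$ submatrix multiplied by $\mathbf{m}_j$ during encoding. If the $i$-th initial symbol is unchanged, it must coincide with some symbol of a final codeword $\cf(\mathbf{m}_j)$, which depends on $\mathbf{m}_j$ alone. Since this identity must hold for every message $(\mathbf{m}_1,\dots,\mathbf{m}_\lambda)$, the $\ell$ columns of $\gi$ associated with symbol $i$ must vanish outside the $j$-th row block. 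Consequently the unchanged symbols partition as $U_1\cup\cdots\cup U_\lambda$, where $U_j$ collects the symbols whose column blocks are supported on the $j$-th row block.

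Next, I would exploit the MDS property of $\ci$: for any $\ki$ symbols, the corresponding $\ki\ell$ columns of $\gi$ form an invertible matrix, so any subset of at most $\ki$ symbols has $\F$-linearly independent column blocks. On the other hand, the $|U_j|\ell$ columns associated with $U_j$ lie inside the $\kf\ell$-dimensional row support of block $j$. Since $\lambda\ge 2$, we have $\kf+1\le\ki$, so if $|U_j|\ge\kf+1$, choosing any $\kf+1$ symbols from $U_j$ would give $(\kf+1)\ell$ independent vectors in a $\kf\ell$-dimensional space, a contradiction. Hence $|U_j|\le\kf$, and the total number of unchanged symbols is at most $\sum_{j=1}^{\lambda}|U_j|\le\lambda\kf=\ki$.

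For achievability, I would construct a convertible code attaining $\ki$ unchanged symbols directly. Pick any $[\nii,\ki,\ell]$ and $[\nf,\kf,\ell]$ systematic MDS array codes for $\ci$ and $\cf$ (these exist over sufficiently large $\F$, e.g.\ from Reed--Solomon-type constructions). In systematic form, the first $\ki=\lambda\kf$ symbols of $\ci((\mathbf{m}_1,\dots,\mathbf{m}_\lambda))$ are exactly the concatenation $\mathbf{m}_1,\dots,\mathbf{m}_\lambda$, and the first $\kf$ symbols of $\cf(\mathbf{m}_j)$ are exactly $\mathbf{m}_j$. Identifying the $((j-1)\kf+t)$-th initial systematic symbol with the $t$-th systematic symbol of $\cf(\mathbf{m}_j)$ yields $\ki$ unchanged symbols; the $\ri$ parities of the initial codeword are retired, and the conversion procedure simply reads the unchanged symbols and re-encodes each $\mathbf{m}_j$ into the parity symbols of $\cf(\mathbf{m}_j)$. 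The main obstacle is the linear-independence step: one has to translate the semantic condition ``symbol depends only on $\mathbf{m}_j$'' into the column-support statement about $\gi$, and then combine it with the MDS property applied to subsets of size at most $\ki$ to derive the dimension inequality; the achievability part is then routine.
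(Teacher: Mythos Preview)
Your proof is correct and follows essentially the same route as the paper: partition the unchanged symbols according to which final codeword they land in, show each part has size at most $k^F$ by the MDS property of $\ci$, and exhibit the bound via systematic codes with re-encoding. The only minor variation is that the paper derives $|U_j|\le k^F$ by invoking the MDS property of $\cf$ (any $k^F{+}1$ symbols of a final codeword are dependent, contradicting independence in $\ci$), whereas you obtain it directly from the dimension of the $j$-th row block without using the MDS property of $\cf$ at all.
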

\begin{proof}

    By the MDS property of $\mathcal{C}^F$, any subset of $k^F + 1$ symbols is linearly dependent. 
    Hence, each final codeword can contain at most $k^F$ unchanged symbols from the initial codeword. Otherwise, these $k^F + 1\le\ki$ symbols are linearly dependent in  the initial codeword, which contradicts the MDS property of $\mathcal{C}^I$.
    Since the initial codeword is split into $\lambda$ final codewords, the total number of unchanged symbols is at most $\lambda k^F = k^I$. 
    The number of unchanged symbols can be achieved by straightforward re-encoding.
\end{proof}

For a matrix $\mathbf{M}$, we denote by $\langle \mathbf{M}\rangle$ the column space of $\mathbf{M}$.  
Let $S_1\subseteq [m], S_2\subseteq [n]$. If  the matrix $\mathbf{M}$ has size $m\times n$, we use $\mathbf{M}[S_1;S_2]$ to denote the submatrix of $\mathbf{M}$ formed by selecting rows indexed by $S_1$ and columns indexed by $S_2$.
If $\mathbf{M}$ is a  block matrix with $m\ell\times n\ell$ entries and each block of size $\ell\times\ell$,
we write $\mathbf{M}[S_1;S_2]$ to be the block submatrix consisting of block rows indexed by  $S_1$ and block columns indexed by $S_2$.
For brevity, we also write $\mathbf{M}[S_1; :]$ (resp. $\mathbf{M}[:;S_2]$) to denote the submatrix obtained by selecting only the block rows indexed by $S_1$ (resp. only the block columns indexed by $S_2$).

\subsection{Organization of This Paper}
The remainder of this paper is organized as follows.  In Section~\ref{inclusion_relation}, we establish an inclusion relation between the column spaces of the generator matrices of the initial and final codes, which forms the algebraic foundation for our lower-bound analysis.
In Section~\ref{bounds}, we derive the main results — lower bounds on the  bandwidth cost of MDS convertible codes in the split regime. Finally, in Section~\ref{conclusion}, we conclude the paper with a comparison between our bound and the existing results of Maturana and Rashmi~\cite{Maturana2022BandwidthCO}. An explicit example achieving our bound is presented in  Appendix~\ref{ppendix_A}.

\section{An Inclusion Relation Between Generator Matrices}\label{inclusion_relation}
In this section, we establish an inclusion relation between column spaces of the systematic generator matrices of the initial and final codes during conversion.
This relation serves as a crucial algebraic foundation for deriving the lower bounds on bandwidth cost in Section~\ref{bounds}.

As in~\cite{Maturana2022BandwidthCO}, we focus on stable convertible codes. By Lemma~\ref{lem_stable} and Definition~\ref{def_stable_convertible}, we have $|N_i|=\rf$ for each $i\in[\lambda]$. Thus, the write bandwidth cost is fixed as $W=(\sum_{i=1}^\lambda|N_i|)\ell=\lambda\rf\ell$. Minimizing the total bandwidth cost therefore reduces to minimizing the read bandwidth cost.   We next specify the structure of the generator matrices of the initial and final codes.

Assume the first $\ki$ symbols of the initial code and the first $\kf$ symbols of the final code are stable. Then, the generator  matrices of $\ci,\cf$ can be written in the following systematic form. 
     $$\gi=\Gi, \mathbf{B}=\B.$$
     $$\gf=\Gf, \mathbf{C}=\C.$$\\
Here each block $\mathbf{B}_{i,j}(i\in[\ki],j\in[\ri])$ and $\mathbf{C}_{i,j}(i\in[\kf],j\in[\rf])$ is an $\ell\times\ell$ matrix over $\F$.
 Both $\mathcal{C}^I$ and $\mathcal{C}^F$ are MDS array codes if and only if every block square submatrix of $\mathbf{B}$ and $\mathbf{C}$ is nonsingular. This can be viewed as a block-matrix extension of the superregular matrices defined in Section II.B of \cite{maturana2022convertible}.
 
The conversion and its associated bandwidth cost can be characterized by the following lemma.
\begin{lemma}\label{character_lemma}
    Let $(\mathcal{C}^I, \mathcal{C}^F)$ be a stable $[n^{I},k^{I}=\lambda k^{F};n^{F},k^{F};\ell]$ convertible code  with  generator matrices $\gi$ and $\gf$ as defined above. Let $\mathcal{T}$ denote the linear conversion procedure that minimizes  the read cost. \\
Write  \begin{equation*}
        \tilde{\mathbf{C}}=
        \begin{bmatrix}
            \mathbf{C}^{(1)} & & \\
            & \ddots & \\
            & & \mathbf{C}^{(\lambda)}
        \end{bmatrix}
        , 
    \end{equation*}
     where
     \begin{equation*}
     \mathbf{C}^{(i)} = 
    \begin{bmatrix}
        \mathbf{C}_{1,1}[\overline{D_{(i-1)k^F+1}}; :] & \cdots & \mathbf{C}_{1, r^F}[\overline{D_{(i-1)k^F+1}}; :] \\
        \vdots & \ddots & \vdots \\
        \mathbf{C}_{k^F,1}[\overline{D_{ik^F}}; :] & \cdots & \mathbf{C}_{k^F, r^F}[\overline{D_{ik^F}}; :]
    \end{bmatrix},
     \end{equation*}
    for $i \in [\lambda]$, and
    \begin{equation*}
        \tilde{\mathbf{B}}=
        \begin{bmatrix}
            \mathbf{B}_{1,1}[\overline{D_{1}}; D_{\ki+1}] & \cdots & \mathbf{B}_{1,\ri}[\overline{D_{1}}; D_{n^I}] \\
            \vdots & \ddots & \vdots \\
            \mathbf{B}_{\ki,1}[\overline{D_{\ki}}; D_{\ki+1}] & \cdots & \mathbf{B}_{\ki,\ri}[\overline{D_{\ki}}; D_{n^I}]
       \end{bmatrix}.
    \end{equation*}
 Then it holds that:
    \begin{equation}\label{character_eq}
        \left\langle\tilde{\mathbf{C}} \right\rangle
        \subseteq
        \left\langle\tilde{\mathbf{B}} \right\rangle.
    \end{equation}
    Moreover, the matrix $\tilde{\mathbf{B}}$ has full column rank.
\end{lemma}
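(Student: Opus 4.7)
The plan is to reduce the conversion feasibility condition to a linear-algebraic containment on the space of linear functionals of the \emph{unread} message coordinates that a linear procedure can recover from the reads. Write the message as $\mathbf{m}=(\mathbf{m}_1,\ldots,\mathbf{m}_{\ki})$ and split each $\mathbf{m}_j$ into a read part $\mathbf{m}_j[D_j]$ and an unread part $\mathbf{m}_j[\overline{D_j}]$; collect the unread subsymbols into a single row vector $\mathbf{u}$, grouped naturally into $\lambda$ blocks $\mathbf{u}^{(1)},\ldots,\mathbf{u}^{(\lambda)}$ corresponding to the $\lambda$ final codewords.

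For the inclusion \eqref{character_eq}, I would first expand each read parity subsymbol: for $t\in[\ri]$ and $s\in D_{\ki+t}$,
\[
\mathbf{c}_{\ki+t}[s]=\sum_{j=1}^{\ki}\mathbf{m}_j[D_j]\,\mathbf{B}_{j,t}[D_j;s]+\mathbf{u}\cdot\mathbf{b}_{(t,s)},
\]
where $\mathbf{b}_{(t,s)}$ is precisely the column of $\tilde{\mathbf{B}}$ indexed by $(t,s)$. Since the first term is computable from the systematic reads, the totality of linear functionals of $\mathbf{u}$ that $\mathcal{T}$ can evaluate from all its reads is exactly $\langle\tilde{\mathbf{B}}\rangle$. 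Performing the same expansion on each subsymbol of the new parity $\mathbf{p}_{a,t}=\sum_{j=1}^{\kf}\mathbf{m}_{(a-1)\kf+j}\,\mathbf{C}_{j,t}$ shows that its $\mathbf{u}$-dependence is supported on $\mathbf{u}^{(a)}$ and, as the subsymbol index varies, sweeps out precisely the columns of the $a$-th block $\mathbf{C}^{(a)}$, that is, the columns of $\tilde{\mathbf{C}}$ lying in the $a$-th block. Because $\mathcal{T}$ must produce these subsymbols for every message (equivalently, the claim must hold after specializing to the all-zero systematic reads), each column of $\tilde{\mathbf{C}}$ must lie in $\langle\tilde{\mathbf{B}}\rangle$.

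For the full column rank claim on $\tilde{\mathbf{B}}$, I would argue by contradiction using the minimality of the read cost. Suppose some column $(t^*,s^*)$ of $\tilde{\mathbf{B}}$ were a linear combination of the remaining columns. Combining this dependency with the (linearly computable) systematic contributions to the corresponding parity subsymbols expresses $\mathbf{c}_{\ki+t^*}[s^*]$ as an explicit linear function of the systematic reads together with the \emph{other} parity reads. A modified procedure $\mathcal{T}'$ that omits the read $(\ki+t^*,s^*)$, first reconstructs that subsymbol from the remaining inputs, and then invokes $\mathcal{T}$, would produce identical outputs with strictly smaller read cost, contradicting the minimality of $\mathcal{T}$.

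The main obstacle will be keeping the transpose orientation straight when passing from ``the reads determine exactly the $\langle\tilde{\mathbf{B}}\rangle$-directions of $\mathbf{u}$'' to the claimed column-space inclusion, since the natural formulation of feasibility is an inclusion of left null spaces $\{\mathbf{u}:\mathbf{u}\tilde{\mathbf{B}}=0\}\subseteq\{\mathbf{u}:\mathbf{u}\tilde{\mathbf{C}}=0\}$, which has to be dualized to the stated inclusion of column spaces. A secondary technical point is verifying that skipping a redundant parity read does not disturb any other requirement of the procedure; because $\mathcal{T}$ is purely linear in the reads, the substitution ``reconstruct then apply $\mathcal{T}$'' is itself a valid linear procedure, so this concern dissolves, but it should be stated explicitly.
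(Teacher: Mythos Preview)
Your proposal is correct and follows essentially the same route as the paper: the paper writes the linearity of $\mathcal{T}$ as $\mathrm{diag}(\mathbf{C},\ldots,\mathbf{C})=\tilde{\mathbf{G}}^I\mathbf{T}$ for some matrix $\mathbf{T}$, reads off the column-space inclusion, and then deletes the rows indexed by the read systematic positions to pass from $\tilde{\mathbf{G}}^I$ to $\tilde{\mathbf{B}}$---which is exactly your coordinatewise expansion of each read and each target parity into a systematic part plus $\mathbf{u}\cdot(\text{column of }\tilde{\mathbf{B}}\text{ or }\tilde{\mathbf{C}})$. Your dualization worry is unnecessary, since this direct argument already yields $\langle\tilde{\mathbf{C}}\rangle\subseteq\langle\tilde{\mathbf{B}}\rangle$ without ever passing through left null spaces; the full-rank claim is likewise handled in the paper by the same redundancy/minimality contradiction you describe.
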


\begin{proof}
    Since $(\ci,\cf)$ forms a stable convertible code  with generator matrices $\gi$ and $\gf$, the conversion procedure $\mathcal{T}$ guarantees that,  for each $i\in[\lambda]$ and any message  $\mathbf{m}_i\in\F^{\kf\ell}$, the new symbols in the $i$-th final codeword $\mathcal{C}^F(\mathbf{m}_i)$ are computable solely from  subsymbols read from the initial codeword $\mathcal{C}^I(\mathbf{m}_1,\cdots,\mathbf{m}_\lambda)$. By Definition~\ref{def_convertible_code}, there exists a matrix \( \mathbf{T} \) such that
    \begin{equation}\label{eq1_character_lem}
    (\mathbf{m}_1,\cdots,\mathbf{m}_\lambda)
    \begin{bmatrix}
        \mathbf{C}&&\\&\ddots&\\&&\mathbf{C}
    \end{bmatrix}
    =(\mathbf{m}_1,\cdots,\mathbf{m}_\lambda)\tilde{\mathbf{G}}^I \mathbf{T},
    \end{equation}
    where $\tilde{\mathbf{G}}^I$ is the submatrix of $\mathbf{G}^I$ formed by the columns corresponding to the read subsymbols, i.e.,
    \[
\tilde{\mathbf{G}}^I =
\left[
\mathbf{D}_\mathcal{T}\ \middle|\ \mathbf{B}_\mathcal{T}
\right],
\]
where
\[
\mathbf{D}_\mathcal{T}
=
\operatorname{diag}\big(
\mathbf{I}_\ell[:,D_1],
\ldots,
\mathbf{I}_\ell[:,D_{k^I}]
\big),
\]
and
\[
\mathbf{B}_\mathcal{T}
=
\begin{bmatrix}
\mathbf{B}_{1,1}[:,D_{k^I+1}] & \cdots & \mathbf{B}_{1,r^I}[:,D_{n^I}] \\
\vdots & \ddots & \vdots \\
\mathbf{B}_{k^I,1}[:,D_{k^I+1}] & \cdots & \mathbf{B}_{k^I,r^I}[:,D_{n^I}]
\end{bmatrix}.
\]
Since \eqref{eq1_character_lem} holds for all message vectors $\mathbf{m}=(\mathbf{m}_1,\cdots,\mathbf{m}_\lambda)$, one can choose
    $\mathbf{m}$ ranging over all standard basis vectors of $\mathbb{F}_q^{k^I \ell}$. In that case, each standard basis vector selects the corresponding row of both the left-hand block-diagonal matrix and the right-hand matrix $\tilde{\mathbf{G}}^I \mathbf{T}$. Then, we have, 
    \begin{align}\label{character_eq3}
\left[
\begin{array}{ccc}
\mathbf{C} & & \\
& \ddots & \\
& & \mathbf{C}
\end{array}
\right] = \tilde{\mathbf{G}}^I \mathbf{T}.
\end{align}
This implies that
    \begin{equation}\label{character_eq2}
         \left\langle
         \begin{bmatrix}
            \mathbf{C}&&\\&\ddots&\\&&\mathbf{C}
         \end{bmatrix}\right\rangle\subseteq\left\langle
         \tilde{\mathbf{G}}^I\right\rangle.
    \end{equation}
    By eliminating all rows corresponding to the unique nonzero entries of the identity sub-blocks in $\tilde{\gi}$, we  obtain  inclusion  \eqref{character_eq}.

  To prove that $\tilde{\mathbf{B}}$ has full column rank, assume for contradiction that it does not.
  Then some nontrivial linear combination of its columns equals zero, implying a nontrivial dependence among the columns of $\tilde{\mathbf{G}}^I$.  
  This in turn means that certain read subsymbols are linearly dependent, and hence at least one of them is redundant for reconstruction. 
  However, because we assume that the conversion procedure $\mathcal{T}$ minimizes the read cost, it does not carry out any unnecessary reads.
  This contradiction shows that $\tilde{\mathbf{B}}$ has full column rank.
\end{proof}

\begin{remark}\label{compute_bandwidth}
Conversely, if the inclusion relation~\eqref{character_eq} holds, 
then \eqref{character_eq2} follows directly. Then, there exists a matrix $\mathbf{T}$ such that \eqref{character_eq3} holds. This matrix $\mathbf{T}$ induces a conversion procedure $\mathcal{T}$ with read bandwidth cost 
$$
\rank(\tilde{\mathbf{B}}) + k^I\ell - \mathrm{row}(\tilde{\mathbf{B}}),
$$
where $\mathrm{row}(\tilde{\mathbf{B}})$ denotes the number of rows of $\tilde{\mathbf{B}}$. This is because the reads from the first $k^I$ systematic symbols contribute $\sum_{j=1}^{k^I}|D_j|$ subsymbols, and 
\[
\mathrm{row}(\tilde{\mathbf{B}})=\sum_{j=1}^{k^I}|\overline{D_j}|
=\sum_{j=1}^{k^I}(\ell-|D_j|)=k^I\ell-\sum_{j=1}^{k^I}|D_j|,
\]
so $\sum_{j=1}^{k^I}|D_j|=k^I\ell-\mathrm{row}(\tilde{\mathbf{B}})$.
The remaining downloads come from the last $r^I$ symbols and correspond to the columns of $\tilde{\mathbf{B}}$. If these downloaded columns are linearly dependent, then some downloads are redundant and can be removed without affecting the induced linear relation. Hence it suffices to download only $\rank(\tilde{\mathbf{B}})$ subsymbols (equivalently, the corresponding columns) that span $\langle\tilde{\mathbf{B}}\rangle$. Therefore the read bandwidth cost is $\rank(\tilde{\mathbf{B}})+k^I\ell-\mathrm{row}(\tilde{\mathbf{B}})$.

\end{remark}

\section{ Lower Bounds on Bandwidth Cost}\label{bounds}

In this section, under the assumption that the conversion procedure is linear, we provide several lower bounds on the read bandwidth cost of stable MDS convertible codes in the split regime.
\begin{thm}\label{bound0}
    For every stable  $[\nii,\ki=\lambda\kf;\nf,\kf;\ell]$MDS convertible code with $\kf\le\rf$, the read bandwidth cost satisfies
    \begin{equation*}
        R \geq \ki\ell.
    \end{equation*}
\end{thm}

\begin{proof}
    Let $(\ci,\cf)$ be a stable convertible code with $\kf\le\rf$, and a conversion procedure $\mathcal{T}$ achieving the minimum bandwidth cost. 
    By \cref{character_lemma}, the inclusion relation~\eqref{character_eq} implies that $\rank(\tilde{\mathbf{C}})\le \rank(\tilde{\mathbf{B}})$.
    Since $\tilde{\mathbf{B}}$ has full column rank, we obtain
    \begin{equation}\label{bound1_total_rank}
        \sum\limits_{i=1}^{\lambda}\rank(\mathbf{C}^{(i)})\le \sum\limits_{j=\ki+1}^{\nii} \beta_j.
    \end{equation}
    Since $\kf\le\rf$, the block matrix $\mathbf{C}$ has full row rank and $\rank(\mathbf{C})=\kf\ell$. Hence, for each $i$,
    $$\rank(\mathbf{C}^{(i)})\ge \kf\ell-\sum_{j=(i-1)\kf+1}^{i\kf}\beta_j.$$
    Summing over $i\in[\lambda]$, we have
     \begin{equation*}
\begin{aligned}
\sum_{i=1}^{\lambda}\rank(\mathbf{C}^{(i)})
&\ge
\sum_{i=1}^{\lambda}
\left(
\kf\ell
-
\sum_{j=(i-1)\kf+1}^{i\kf}\beta_j
\right)  \\
&=
\ki\ell-\sum_{j=1}^{\ki}\beta_j .
\end{aligned}
\end{equation*}
   Combining this with~\eqref{bound1_total_rank} yields that
    $$\sum\limits_{j=\ki+1}^{\nii} \beta_j\ge\lambda\kf\ell-\sum_{j=1}^{\ki}\beta_j.$$
    It follows that ${R} \geq \ki\ell.$
\end{proof}
\begin{remark}
    The lower bound $\lambda\kf\ell$ can  be achieved by full re-encoding and is therefore tight. This coincides with the lower bound in Theorem~4 of~\cite{Maturana2022BandwidthCO} for the case $k^F \le r^F$, but is derived here via a distinct algebraic argument.
\end{remark}

For the case where $\kf \ge \rf$, some additional structural constraints on the generator matrices of the initial and final codes arise, leading to another lower bound on the read bandwidth cost, as stated below.

\begin{thm}\label{bound1}
    For every stable MDS $[\nii,\ki=\lambda\kf;\nf,\kf;\ell]$ convertible code satisfying $\rf<\kf$ and $\ri\le\kf$, the  read bandwidth cost satisfies
    \begin{equation*}
        R \ge\begin{cases}
            \lambda \kf \ell-\frac{(\kf-\rf)\ri}{\rf}\ell & \text{if } \ri\le\rf\le\kf,\\[2mm]
            \lambda\rf\frac{(\lambda-1)\kf+\ri}{(\lambda-1)\rf+\ri}\ell, &\text{if } \rf\le\ri\le\kf.
        \end{cases} 
    \end{equation*}
\end{thm}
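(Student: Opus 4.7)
The strategy is to apply Lemma~\ref{character_lemma} and to solve the resulting linear program separately in each sub-case. Introduce the aggregated variables $b_g = \sum_{i=(g-1)\kf+1}^{g\kf}\beta_i$ for $g\in[\lambda]$ and $t = \sum_{j=\ki+1}^{\nii}\beta_j$, so that $R = t + \sum_g b_g$ with $0\le t\le\ri\ell$ and $0\le b_g\le\kf\ell$. Because $\tilde{\mathbf B}$ has full column rank (so $\dim\langle\tilde{\mathbf B}\rangle = t$) and $\tilde{\mathbf C}$ is block-diagonal (so $\dim\langle\tilde{\mathbf C}\rangle = \sum_g\rank(\mathbf C^{(g)})$), the inclusion in Lemma~\ref{character_lemma} yields the sum-rank inequality
\[
\sum_{g=1}^{\lambda}\rank(\mathbf C^{(g)}) \le t.
\]
Superregularity of $\mathbf C$ and $\rf<\kf$ give $\rank(\mathbf C^{(g)}) = \min(\rf\ell,\,\kf\ell - b_g)$, a piecewise-linear function of $b_g$.

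For the first sub-case $\ri\le\rf$, the sum-rank inequality combined with $t\le\ri\ell$ already forces the stated bound. Rewriting the objective as $R \ge \sum_g\bigl(b_g + \min(\rf\ell,\kf\ell-b_g)\bigr)$, each summand equals $\rf\ell + b_g$ when $b_g \le (\kf-\rf)\ell$ and equals $\kf\ell$ when $b_g \ge (\kf-\rf)\ell$. The LP optimum is attained by the extreme configuration in which $s$ groups sit at $b_g=\kf\ell$ and $\lambda-s$ groups sit at $b_g=0$, with $t=\ri\ell$ saturated and feasibility $(\lambda-s)\rf\ell\le\ri\ell$ forcing $s = \lambda - \ri/\rf$. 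Allowing the fractional $s$ (realized by placing one $b_g$ in the flat region $[(\kf-\rf)\ell,\kf\ell]$, on which $b_g + \min(\rf\ell,\kf\ell-b_g)$ is constant) yields $R \ge \lambda\rf\ell + s(\kf-\rf)\ell = \lambda\kf\ell - (\kf-\rf)\ri\ell/\rf$.

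For the second sub-case $\rf\le\ri\le\kf$, the sum-rank inequality alone is loose (for instance, $b_g=0$ and $t=\lambda\rf\ell$ satisfies it whenever $\ri\ge\lambda\rf$), so additional per-group constraints must be brought in. I would exploit the fact that each column of $\mathbf C^{(g)}$, viewed inside $\langle\tilde{\mathbf B}\rangle$, is supported only on the rows indexed by group $g$, and hence belongs to the kernel of the projection of $\langle\tilde{\mathbf B}\rangle$ onto the rows outside group $g$. Bounding that kernel via the block-MDS hypothesis on $\mathbf B$ produces, in the regime $b_g\le(\kf-\rf)\ell$, per-group inequalities of the form $t \ge (\lambda-1)\kf\ell + \rf\ell - \sum_{g'\ne g}b_{g'}$. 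Taking a convex combination of the sum-rank inequality with weight $\ri$ and the $\lambda$ per-group inequalities each with weight $\rf$, and then substituting $t\le\ri\ell$, should cause the $b_g$-coefficients to assemble uniformly into a single multiple of $\sum_g b_g$ and collapse the aggregate to the claimed bound $R \ge \lambda\rf\cdot\frac{(\lambda-1)\kf+\ri}{(\lambda-1)\rf+\ri}\ell$.

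The main obstacle will be the second sub-case: deriving the per-group inequality rigorously requires controlling the rank of a non-block-aligned submatrix of $\mathbf B$ using only its block-MDS property (scalar superregularity of $\mathbf B$ is not available), and the LP dual weights must be tuned so that the $b_g$-contributions cancel into a clean multiple of $R$, leaving an inequality in $R$ and $t$ alone that becomes tight when $t=\ri\ell$.
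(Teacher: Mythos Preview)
Your proposal has two genuine gaps.

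First, the claimed equality $\rank(\mathbf C^{(g)}) = \min(\rf\ell,\,\kf\ell - b_g)$ is false. For an MDS \emph{array} code the ``superregularity'' of $\mathbf C$ is a block statement (every $s\times s$ block submatrix is invertible); deleting arbitrary sub-rows within blocks can drop the rank below $\rf\ell$ even when $\kf\ell - b_g \ge \rf\ell$. The paper circumvents this by taking, for each $\rf$-subset $U\subseteq[\kf]$, the block-MDS bound $\rank(\mathbf C^{(g)})\ge \rf\ell-\sum_{j\in U}\beta_{(g-1)\kf+j}$ and \emph{averaging} over all $U$, yielding $\rank(\mathbf C^{(g)})\ge \rf\ell-\frac{\rf}{\kf}b_g$. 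That averaged bound is all one needs for the first sub-case: combining it with $\sum_g\rank(\mathbf C^{(g)})\le t\le\ri\ell$ gives the constraint $\rf\sum_g b_g+\kf t\ge\lambda\kf\rf\ell$ directly, and hence $R\ge\lambda\kf\ell-\frac{\kf-\rf}{\rf}\ri\ell$.

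Second, your per-group inequality $t \ge (\lambda-1)\kf\ell+\rf\ell-\sum_{g'\ne g}b_{g'}$ cannot hold: setting all $b_{g'}=0$ (which lies in your stated regime) would force $t\ge(\lambda-1)\kf\ell+\rf\ell>\ri\ell$. The paper's route to the second sub-case is different and does not pass through $t$ at all. From $\langle\tilde{\mathbf C}\rangle\subseteq\langle\tilde{\mathbf B}\rangle$ and the block-diagonal structure of $\tilde{\mathbf C}$ one gets, for each $i$, the projection inequality
\[
\rank(\mathbf B^{(i)})-\rank(\mathbf C^{(i)})\ \le\ \rank(\tilde{\mathbf B})-\rank(\tilde{\mathbf C}),
\]
where $\mathbf B^{(i)}$ is the group-$i$ block-row slice of $\tilde{\mathbf B}$. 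Summing over $i$ gives $\sum_i\rank(\mathbf B^{(i)})\le\lambda t-(\lambda-1)\rank(\tilde{\mathbf C})$. Now one lower-bounds $\rank(\mathbf B^{(i)})$ by the same averaging trick (using $\ri\le\kf$ so that any $\ri$ block rows of $\mathbf B$ within group $i$ form an invertible block) to get $\rank(\mathbf B^{(i)})\ge t-\frac{\ri}{\kf}b_i$, and lower-bounds $\rank(\tilde{\mathbf C})=\sum_i\rank(\mathbf C^{(i)})$ as above. The variable $t$ then cancels, leaving the pure constraint $\sum_g b_g\ge\frac{\lambda\kf(\lambda-1)\rf\ell}{(\lambda-1)\rf+\ri}$. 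Feeding this and the first constraint into the LP yields the stated bound for $\rf\le\ri\le\kf$. Your kernel heuristic is in the right spirit, but the workable inequality is the projection one above, and the key output is a lower bound on $\sum_g b_g$ independent of $t$, not a lower bound on $t$.
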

\begin{proof}
    Let $(\ci,\cf)$ be a stable MDS convertible code with $\rf<\kf$, $\ri\le\kf$ and a conversion procedure $\mathcal{T}$ that minimizes the total bandwidth cost. 
    By \cref{character_lemma}, the inclusion relation~\eqref{character_eq}  holds, and so does~\eqref{bound1_total_rank}. 
    
    Consider any subset $U_i=\{u_{i,1},\cdots,u_{i,{\rf}}\}\subseteq[\kf]$ of size $\rf$. Since $\cf$ is an MDS code, the square submatrix of $\mathbf{C}$ consisting of block rows  indexed by $j\in U_i$, is invertible and thus has  rank $\rf\ell$. 
    Removing rows indexed by \( D_{(i-1)\kf + j} \) for every \( j \in U_i \) yields a submatrix of \( \mathbf{C}^{(i)} \).
    Let
   \[
   E_{i,a}:=\overline{D_{(i-1)\kf+u_{i,a}}},\quad a\in[\rf].
   \]
   The resulting submatrix is
   $$\begin{bmatrix} \mathbf{C}_{u_{i,1},1}[E_{i,1};:]& \cdots& \mathbf{C}_{u_{i,1},\rf}[E_{i,1};:]\\ \vdots&\ddots&\vdots\\ \mathbf{C}_{u_{i,\rf},1}[E_{i,\rf};:]& \cdots& \mathbf{C}_{u_{i,\rf},\rf}[E_{i,\rf};:] \end{bmatrix}.$$
    We have
    $$\rank(\mathbf{C}^{(i)})\ge \rf\ell-\left(\sum\limits_{j\in U_i} \beta_{(i-1)\kf+j}\right).$$
    Summing over all $\binom{k^F}{r^F}$ such subsets $U_i$, each $j\in[k^F]$ appears in exactly $\binom{k^F-1}{r^F-1}$ of them, giving
   \[
\begin{aligned}
\binom{k^F}{r^F}\operatorname{rank}(\mathbf{C}^{(i)})
&\ge
\binom{k^F}{r^F} r^F\ell  \\
&\quad
-
\binom{k^F-1}{r^F-1}
\sum_{j\in[k^F]} \beta_{(i-1)k^F+j}.
\end{aligned}
\]
   Then, by $\binom{k^F}{r^F}/\binom{k^F-1}{r^F-1}=k^F/r^F$, the above inequality can be simplified as
   $$\kf \rank(\mathbf{C}^{(i)})\ge\kf \rf\ell- \rf\sum\limits_{j\in [{\kf}]} \beta_{(i-1)\kf+j}. $$
   By summing over all $i\in[\lambda]$, this yields 
   \begin{align*} 
        \sum\limits_{i=1}^{\lambda}\kf\rank(\mathbf{C}^{(i)})
        &\ge\lambda\kf\rf\ell-\rf\sum\limits_{i=1}^{\lambda}\sum\limits_{j\in [\kf]} \beta_{(i-1)\kf+j}\\
        &\ge\lambda\kf\rf\ell-\rf\sum\limits_{j\in [\ki]} \beta_{j}.
    \end{align*}
    Combining this with~\eqref{bound1_total_rank}, we obtain the following inequality: 
    \begin{equation}\label{bound1_condition1}
        \lambda\kf\rf\ell-\rf\sum\limits_{i=1}^{\ki} \beta_{i}{\le}\kf\sum\limits_{i=\ki+1}^{\nii} \beta_i.
    \end{equation}
     
    By~\eqref{character_eq}, 
    the subspace $\langle\tilde{\mathbf{C}}\rangle$  can be expanded to column space $\langle\tilde{\mathbf{B}}\rangle$ by adding $\rank(\tilde{\mathbf{B}})-\rank(\tilde{\mathbf{C}})$ column vectors. For each $i\in[\lambda]$, we denote $\mathbf{B}^{(i)}$ as the submatrix of $\tilde{\mathbf{B}}$ obtained by restricting $\tilde{\mathbf{B}}$ to block rows indexed by $(i-1)k^F+1$ through $i k^F$. Define
   \[
F_{i,a}:=\overline{D_{(i-1)\kf+a}},\;
s_{i,a}:=(i-1)\kf+a,\; a\in[\kf].
\]
 $$Q_b:=D_{\ki+b},\; b\in[\ri].$$
Then $\mathbf{B}^{(i)}$ can be written as
    \begin{equation}\label{def:Bi}
\begin{bmatrix}
\mathbf{B}_{s_{i,1},1}[F_{i,1}; Q_1] 
& \cdots & 
\mathbf{B}_{s_{i,1},\ri}[F_{i,1}; Q_{\ri}] \\
\vdots & \ddots & \vdots \\
\mathbf{B}_{s_{i,\kf},1}[F_{i,\kf}; Q_1] 
& \cdots & 
\mathbf{B}_{s_{i,\kf},\ri}[F_{i,\kf}; Q_{\ri}]
\end{bmatrix}.
\end{equation}
    Then, by \cref{character_lemma}, we have 
    \begin{equation}\label{condition2}
        \rank(\mathbf{B}^{(i)})-\rank(\mathbf{C}^{(i)})\le\rank(\tilde{\mathbf{B}})-\rank(\tilde{\mathbf{C}}).
    \end{equation}
     Summing  over all $i\in[\lambda]$, this implies that
    \begin{align}
    \begin{split}\label{sum_main_ineq}
        \sum_{i=1}^{\lambda}\rank(\mathbf{B}^{(i)})\le\lambda\rank(\tilde{\mathbf{B}})-(\lambda-1)\rank(\tilde{\mathbf{C}}).
    \end{split}
    \end{align}

    Next, we provide lower bounds on the ranks of matrices $\mathbf{B}^{(i)}$ and $\tilde{\mathbf{C}}$, respectively.

    We start with bounding $\rank(\mathbf{B}^{(i)})$. 
    Since $\mathcal{C}^I$ is an MDS array code, every $r^I\times r^I$ block submatrix of $\mathbf{B}$ is nonsingular.
    Noting that $r^I\le k^F$, for every $H=\{h_1,\cdots,h_{r^I}\}\subseteq[\kf]$,
    \begin{equation*}
\begin{aligned}
&\rank
\begin{bmatrix}
\mathbf{B}_{s_{i,h_1},1}[:;Q_1]
& \cdots &
\mathbf{B}_{s_{i,h_1},\ri}[:;Q_{\ri}] \\
\vdots & \ddots & \vdots \\
\mathbf{B}_{s_{i,h_{\ri}},1}[:;Q_1]
& \cdots &
\mathbf{B}_{s_{i,h_{\ri}},\ri}[:;Q_{\ri}]
\end{bmatrix} \\
&\qquad =
\sum_{j=\ki+1}^{\nii}\beta_j .
\end{aligned}
\end{equation*}
    Deleting rows with index in $D_{(i-1)\kf+h_1},\cdots,D_{(i-1)\kf+h_{r^{I}}}$, we have that \(\rank(\mathbf{B}^{(i)})\) is at least
    \begin{align*}
&\rank
{\setlength{\arraycolsep}{1.5pt}
\begin{bmatrix}
    \mathbf{B}_{s_{i,h_1},1}[F_{i,h_1};Q_1]
& \cdots &
\mathbf{B}_{s_{i,h_1},\ri}[F_{i,h_1};Q_{\ri}] \\
\vdots & \ddots & \vdots \\
\mathbf{B}_{s_{i,h_{\ri}},1}[F_{i,h_{\ri}};Q_1]
& \cdots &
\mathbf{B}_{s_{i,h_{\ri}},\ri}[F_{i,h_{\ri}};Q_{\ri}]
\end{bmatrix}}\\
&\ge
\sum_{j=k^I+1}^{n^I}\beta_j
-
\sum_{j\in H}\beta_{(i-1)k^F+j}.
\end{align*}
    Then, by summing over all possible $H=\{h_1,\cdots,h_{\ri}\}\subseteq[\kf]$, the above inequality implies that
    \begin{align*}
        &\sum\limits_{H\subseteq[\kf],\, |H|=\ri}\rank(\mathbf{B}^{(i)})\\&\ge\sum\limits_{H\subseteq[\kf],\, |H|=\ri}(\sum\limits_{j=\ki+1}^{\nii}\beta_j-\sum\limits_{j\in H} \beta_{(i-1)\kf+j})\\
        &=\binom{\kf}{\ri}\sum\limits_{j=\ki+1}^{\nii}\beta_j-\binom{\kf-1}{\ri-1}\sum\limits_{j=(i-1)\kf+1}^{i\kf} \beta_j .
    \end{align*}
    This leads to
    \begin{equation}\label{B_i}
       \rank(\mathbf{B}^{(i)})\ge \sum\limits_{j=\ki+1}^{\nii}\beta_j-\frac{\ri}{\kf}\sum\limits_{j=(i-1)\kf+1}^{i\kf} \beta_j.
    \end{equation}
    Summing over all possible $i\in[\lambda]$, we have
    \begin{equation}\label{sum_2}
        \sum_{i=1}^{\lambda}\rank(\mathbf{B}^{(i)}) 
    \ge \lambda\sum_{j=k^I+1}^{n^I}\beta_j 
      - \frac{r^I}{k^F}\sum_{j=1}^{k^I}\beta_j. 
    \end{equation}
    
    We next bound $\rank(\tilde{\mathbf{C}})$. Noting that $\rf\le\kf$, for each $i\in[\lambda]$ we can choose a set $I_i=\{g_{i,1},\ldots,g_{i,\rf}\}\subseteq[\kf]$ of size $\rf$. Since $\cf$ is an MDS code, the corresponding $\rf\times\rf$ block submatrix of $\mathbf{C}$ (formed by the block rows indexed by $I_i$ and the block columns $[r^F]$) is invertible, and hence has rank $\rf\ell$. Therefore, we have  that \(\rank(\mathbf{C}^{(i)})\) is at least
   \begin{align*}
&\rank
\begin{bmatrix}
\mathbf{C}_{g_{i,1},1}[F_{i,g_{i,1}};:]
& \cdots &
\mathbf{C}_{g_{i,1},\rf}[F_{i,g_{i,1}};:] \\
\vdots & \ddots & \vdots \\
\mathbf{C}_{g_{i,\rf},1}[F_{i,g_{i,\rf}};:]
& \cdots &
\mathbf{C}_{g_{i,\rf},\rf}[F_{i,g_{i,\rf}};:]
\end{bmatrix} \\
&\ge
\rf\ell-\sum_{j\in I_i}\beta_{(i-1)\kf+j}.
\end{align*} 
   where the second inequality holds because removing $\sum_{j\in I_i}\beta_{(i-1)\kf+j}$ rows can decrease the rank by at most that number.
   This implies that
   \begin{align*}
        \rank(\tilde{\mathbf{C}})&=\sum_{i=1}^{\lambda} \rank(\mathbf{C}^{(i)})\\
        &\ge \sum_{i=1}^{\lambda}(\rf\ell-\sum\limits_{j\in I_i} \beta_{(i-1)\kf+j})\\
        &=\lambda\rf\ell-\sum_{i=1}^{\lambda}
\sum_{j\in I_i}\beta_{(i-1)\kf+j}.
   \end{align*}
  By summing both sides of the above inequality over all possible $I_1,I_2,\ldots,I_{\lambda}\subseteq[\kf]$, we have
   \begin{align*}
&\binom{\kf}{\rf}^{\lambda}\rank(\tilde{\mathbf{C}})
=
\binom{\kf}{\rf}^{\lambda}
\sum_{i=1}^{\lambda}\rank(\mathbf{C}^{(i)}) \\
\ge
\lambda&\binom{\kf}{\rf}^{\lambda}\rf\ell 
-
\sum_{\substack{
I_1,\ldots,I_\lambda\subseteq[\kf]\\
|I_1|=\cdots=|I_\lambda|=\rf
}}
\sum_{i=1}^{\lambda}
\sum_{j\in I_i}
\beta_{(i-1)\kf+j} \\
=
\lambda&\binom{\kf}{\rf}^{\lambda}\rf\ell 
-\binom{\kf-1}{\rf-1}
\binom{\kf}{\rf}^{\lambda-1}
\sum_{j=1}^{\ki}\beta_j,
\end{align*}
   where the last equality follows since each $j\in [k^F]$ appears in exactly
   $\binom{k^F-1}{r^F-1}$ different subsets of $[k^F]$ with size $r^F$. This further implies that
   \begin{equation}\label{c_tilde}
       \rank(\tilde{\mathbf{C}})\ge\lambda\rf\ell-\frac{\rf}{\kf}\sum_{i=1}^{\ki}\beta_i.
   \end{equation}
   
   Finally, since $\rank(\tilde{\mathbf{B}})=\sum\limits_{i=\ki+1}^{\nii}\beta_i$, then by~\eqref{sum_main_ineq},\eqref{sum_2} and \eqref{c_tilde}, it holds that:
   \begin{equation}\label{bound1_condition2}
       \sum_{i=1}^{\ki}\beta_i\ge\frac{\lambda k^F(\lambda-1)\rf\ell}{(\lambda-1)\rf+\ri}.
   \end{equation}
    Now, based on the linear constraints~\eqref{bound1_condition1} and~\eqref{bound1_condition2}, we have the following linear optimization problem:
    \begin{equation*}
    \begin{array}{rl}
    \text{minimize} & R = \sum\limits_{i=1}^{\nii} \beta_i \\
    \text{subject to} & \eqref{bound1_condition1},~\eqref{bound1_condition2}, \\
                      & 0 \le \beta_i \le \ell, \quad \text{for } i \in [\nii].
    \end{array}
    \end{equation*}
    Then, this LP problem can be easily solved to obtain the desired lower bound.
\end{proof}
 \begin{thm}\label{bound_3}
    For any stable MDS $[\nii,\ki=\lambda\kf;\nf,\kf;\ell]$ convertible code satisfying $\rf<\kf<\ri$, the read bandwidth cost $R$ is lower bounded by:
    \begin{equation}
    \begin{cases} 
        \lambda\rf\ell, & \text{if } \ki \le \ri; \\[8pt]
        
        \displaystyle \frac{\lambda^2(\kf)^2 \rf\ell}{\kf \ri - \rf \ri + \lambda\kf \rf}, & \text{if } \ki > \ri, \Lambda \ge 0; \\[10pt]
        
        \displaystyle \ri\ell + \frac{ \lambda\kf[(\lambda-1)\rf-( \ri - \kf)]\ell }{(\lambda-1)\rf+\kf}, & \text{if } \ri < \Gamma, \Lambda \le 0; \\[10pt]
        
        \displaystyle \frac{ (\lambda-1)\rf\ri\ell }{\ri-\kf}, & \begin{aligned}
            & \text{if } \ki > \ri \ge \Gamma \\ 
            & \text{and } \Lambda \le 0,
        \end{aligned}
    \end{cases}
    \end{equation}
    where $\Lambda = \lambda(\kf)^2-(\lambda-1)(\kf-\rf)\ri$, $\Gamma = (\lambda-1)\rf+\kf$.
\end{thm}
\begin{proof}
Let $(\ci,\cf)$ along with the conversion procedure $\mathcal{T}$ form a stable MDS convertible code with $\rf<\kf<\ri$, achieving the minimum bandwidth cost. 
Since $r^F\leq k^F$, by similar arguments as those in the proof of \cref{bound1}, constraints~\eqref{bound1_condition1} and~\eqref{condition2} remain valid. 
Moreover, the bound on the rank of $\tilde{\mathbf{C}}$ in~\eqref{c_tilde} also holds, since its derivation only relies on the condition $r^F \le k^F$.
However, due to $k^F \le r^I$ the expression of $\rank(\mathbf{B}^{(i)})$ changes. 

For fixed \(i\in[\lambda]\), define
\[
s_{i,a}:=(i-1)\kf+a,\ a\in[\kf],
\quad
Q_b:=D_{\ki+b},\ b\in[\ri].
\]
For any subset $J=\{j_1, \dots, j_{k^F}\} \subseteq [r^I]$, consider the $k^F\times k^F$ block submatrix
$$
\mathbf{M}_J = 
\begin{bmatrix}
       \mathbf{B}_{s_{i,1}, j_1} & \cdots & \mathbf{B}_{s_{i,1}, j_{k^F}}\\
               \vdots & \ddots & \vdots \\
                 \mathbf{B}_{s_{i,\kf}, j_1} & \cdots & \mathbf{B}_{s_{i,\kf}, j_{k^F}}
\end{bmatrix}.
$$
Since $\mathcal{C}^I$ is an MDS array code and $k^F\le r^I$, $\mathbf{M}_J$ is nonsingular, hence its  columns are linearly independent.
Let $\widetilde{\mathbf{M}}_J$ denote the matrix obtained by selecting the  columns indexed by $D_{k^I+j_m}$ from each block column $j_m$ of $\mathbf{M}_J$ for $m\in[k^F]$:
$$
\widetilde{\mathbf{M}}_J = 
\begin{bmatrix}
       \mathbf{B}_{s_{i,1}, j_1}[:; Q_{j_1}] & \cdots & \mathbf{B}_{s_{i,1}, j_{k^F}}[:; Q_{j_{k^F}}]\\
               \vdots & \ddots & \vdots \\
                 \mathbf{B}_{s_{i,\kf}, j_1}[:; Q_{j_1}] & \cdots & \mathbf{B}_{s_{i,\kf}, j_{k^F}}[:; Q_{j_{k^F}}]
\end{bmatrix}.
$$
Since $\mathbf{M}_J$ has full column rank, it follows that $\widetilde{\mathbf{M}}_J$ has full column rank and
$\rank(\widetilde{\mathbf{M}}_J) = \sum_{u \in J}\beta_{k^I+u}$.
Then we obtain
\begin{align*}
&\rank
\begin{bmatrix}
       \mathbf{B}_{s_{i,1},1}[:; Q_1] & \cdots & \mathbf{B}_{s_{i,1}, r^I}[:; Q_{\ri}] \\
               \vdots & \ddots & \vdots \\
                 \mathbf{B}_{s_{i,\kf},1}[:; Q_1] & \cdots & \mathbf{B}_{s_{i,\kf}, r^I}[:; Q_{\ri}]
\end{bmatrix}\\
&\ge \rank(\widetilde{\mathbf{M}}_J)
= \sum_{u \in J}\beta_{k^I+u}.
\end{align*}
Because this inequality holds for any subset $J$ of size $k^F$, it follows:
\begin{align*}
&\rank
\begin{bmatrix}
       \mathbf{B}_{s_{i,1},1}[:; Q_1] & \cdots & \mathbf{B}_{s_{i,1}, r^I}[:; Q_{\ri}] \\
               \vdots & \ddots & \vdots \\
                 \mathbf{B}_{s_{i,\kf},1}[:; Q_1] & \cdots & \mathbf{B}_{s_{i,\kf}, r^I}[:; Q_{\ri}]
\end{bmatrix}\\
&\ge\max_{J\subseteq[r^I],\,|J|=k^F}\left\{\sum_{u\in J}\beta_{k^I+u}\right\}
\ge \frac{k^F}{r^I}\sum_{j=k^I+1}^{n^I}\beta_j.
\end{align*}

  Deleting rows in $D_{(i-1)\kf+1},\cdots,D_{i\kf}$, and by the definition of $\mathbf{B}^{(i)}$ in \eqref{def:Bi}, we have
   \begin{align}\label{B_i2}
        \rank(\mathbf{B}^{(i)})&\ge\frac{\kf}{\ri}\sum_{j=\ki+1}^{\nii}\beta_j-\sum_{j=(i-1)\kf+1}^{i\kf}\beta_j.
   \end{align}
   Summing over all possible $i\in[\lambda]$, we have
   \begin{equation}
     \sum_{i=1}^{\lambda}\rank(\mathbf{B}^{(i)})\ge\frac{\lambda\kf}{\ri}\sum_{j=\ki+1}^{\nii}\beta_j-\sum_{j=1}^{\ki}\beta_j.
\end{equation}
 By~\eqref{sum_main_ineq} and~\eqref{c_tilde}, we have
 \begin{align}
 \begin{split}\label{bound2_condition2}
     &\frac{(\lambda-1)\rf+\kf}{\kf}\sum_{i=1}^{\ki}\beta_i+\frac{\lambda(\ri-\kf)}{\ri}\sum_{i=\ki+1}^{\nii}\beta_i\\&\quad\ge\lambda(\lambda-1)\rf\ell.
 \end{split}
 \end{align}  

 Next, to obtain the minimum value of the read bandwidth cost $R$, we consider the following optimization problem:
 \begin{equation*}
    \begin{array}{rl}
        \text{minimize} & R = \sum\limits_{j\in [\nii]}\beta_j \\
        \text{subject to} & \eqref{bound1_condition1},\eqref{bound2_condition2}  \\
                          & 0\le\beta_i\le\ell,\ \text{for}\  i\in[\nii].
    \end{array}
    \end{equation*}
    Let $x=\sum\limits_{i=1}^{\ki}\beta_i$ and $y=\sum\limits_{i=\ki+1}^{\nii}\beta_i$.
    Then the optimization problem can be rewritten as
   \begin{align*}
    \min \;& R = x + y \\[1mm]
    \text{subject to} \;& \rf x + \kf y \ge \lambda \kf\rf \ell, \\[1mm]
                  & \frac{(\lambda-1) \rf + \kf}{\kf} x + \frac{\lambda (\ri - \kf)}{\ri} y \ge \lambda (\lambda-1) \rf \ell, \\[1mm]
                  & 0 \le x \le\lambda\kf  \ell, \quad 0 \le y \le \ri \ell.
\end{align*}
To solve this linear program, we  analyze the feasible region in the $x$-$y$ plane. 
\begin{itemize}
    \item When $\ki \le \ri$, the second constraint does not further restrict the feasible region determined by the first constraint, as illustrated in Fig.~\ref{fig:shaded_regions}(a).
Hence, the optimal solution is attained at $(0,\, \lambda \rf \ell)$, yielding
\[
R = \lambda \rf \ell.
\]
\item When $\ki > \ri$ and $\lambda(\kf)^2-(\lambda-1)(\kf-\rf)\ri\ge 0$,  the optimal solution lies at the intersection of the equality boundaries corresponding to the two constraints as in Fig.~\ref{fig:shaded_regions}(b).
The coordinates of the optimal point $(p,q)$ are
$$\left( \frac{ \lambda\kf \rf (\lambda\kf - \ri)\ell}{\kf\ri - \rf\ri + \lambda\kf\rf},\;
       \frac{ \lambda\kf\rf\ri\ell}{\kf\ri - \rf\ri + \lambda\kf\rf} \right),$$
with corresponding optimal value
$$R= \frac{\lambda^2(\kf)^2\rf\ell }{\kf\ri- \rf\ri + \lambda\kf\rf}.$$
   \item When $\ri < (\lambda-1)\rf + \kf$ and $\lambda(\kf)^2 - (\lambda-1)(\kf-\rf)\ri \le 0$, the slopes of the equality boundaries corresponding to the two constraints are both no less than $-1$. 
   In this case, the optimal solution occurs at the intersection of the second constraint boundary and the line $y = \ri \ell$, as illustrated in Fig.~\ref{fig:shaded_regions}(c).
  The coordinates of the optimal point are
  \[
  \left( 
\frac{ \lambda\kf[(\lambda-1)\rf-( \ri - \kf)]\ell }{(\lambda-1)\rf+\kf },\;
\ri\ell
\right),
 \]
with the corresponding optimal value
\[
R = \ri\ell+ 
\frac{ \lambda\kf[(\lambda-1)\rf-( \ri - \kf)]\ell }{(\lambda-1)\rf+\kf}.
\]

\item When $\ki> \ri\ge (\lambda-1)\rf+\kf$ and $\lambda(\kf)^2-(\lambda-1)(\kf-\rf)\ri\le0$, the optimal solution occurs at the intersection of the second constraint boundary and the line $x = 0$, as illustrated in~ Fig.~\ref{fig:shaded_regions}(d). 
  The coordinates of the optimal point are
  \[
  \left( 0,\;
\frac{ (\lambda-1)\rf\ri\ell }{\ri-\kf}
\right),
 \]
with the corresponding optimal value
\(R=\frac{ (\lambda-1)\rf\ri\ell }{\ri-\kf}.\)
\end{itemize}

This gives the desired result.
\end{proof}

\begin{remark}
As established in the linear programming analysis of the proof above, the optimal read bandwidth is obtained by minimizing $R = x + y$, where $x$ and $y$ represent the amount of data downloaded from the unchanged symbols and the retired symbols, respectively. 

For the first and fourth cases of Theorem~\ref{bound_3}, the optimal solution occurs at $x=0$. This corresponds to the situation where all downloaded data come solely from the retired symbols, and no data are downloaded from the unchanged symbols. 

For the second and third cases of Theorem~\ref{bound_3}, the optimal coordinates yield $x > 0$ and $y > 0$, meaning achieving the bound requires downloading data from both the unchanged and retired symbols. Moreover, in the third case, the optimal solution specifically lies on the upper bound $y = r^I \ell$, which implies that all data stored in the retired symbols must be completely accessed. In particular, the second case is shown to be optimal for certain parameters by the example provided in Appendix \ref{ppendix_A}. 
\end{remark}

\begin{figure*}[!t]
\centering

\newlength{\LPplotwidth}
\newlength{\LPgap}
\setlength{\LPplotwidth}{0.145\textwidth}
\setlength{\LPgap}{0.018\textwidth}

\newcommand{\LPtikz}[1]{%
\resizebox{\LPplotwidth}{!}{#1}%
}

\begin{tabular}{@{}c@{\hspace{\LPgap}}c@{\hspace{\LPgap}}c@{\hspace{\LPgap}}c@{}}

\LPtikz{%
\begin{tikzpicture}[scale=0.8]
    \draw[->] (-0.5,0) -- (4,0) node[right] {$x$};
    \draw[->] (0,-0.5) -- (0,3.5) node[left] {$y$};

    \draw (0,0) rectangle (3,2);
    \draw[name path=rect_left] (0,0) -- (0,2);
    \draw[name path=rect_right] (3,0) -- (3,2);
    \draw[name path=rect_up] (0,2) -- (3,2);
    \draw[name path=rect_down] (0,0) -- (3,0);

    \draw[red, thick, name path=red_line] (0,1.5) -- (3,0);
    \node[below right] at (3,0) {$k^I\ell$};
    \draw[blue, thick, name path=blue_line] (0,1) -- (2.5,0);

    \path[name intersections={of=red_line and rect_left, by=red_left}];

    \fill[pattern=north west lines]
        (red_left) -- (0,2) -- (3,2) -- (3,0) -- cycle;

    \fill (0,1.5) circle (2pt);
    \node at (-0.5,2) {\small$r^I\ell$};
    \node at (-0.7,1.5) {\small$\lambda\rf\ell$};
\end{tikzpicture}
}
&
\LPtikz{%
\begin{tikzpicture}[scale=0.8]
    \draw[->] (-0.5,0) -- (4,0) node[right] {$x$};
    \draw[->] (0,-0.5) -- (0,3.5) node[left] {$y$};

    \draw (0,0) rectangle (3,2);
    \draw[name path=rect_left] (0,0) -- (0,2);
    \draw[name path=rect_right] (3,0) -- (3,2);
    \draw[name path=rect_up] (0,2) -- (3,2);
    \draw[name path=rect_down] (0,0) -- (3,0);

    \draw[red, thick, name path=red_line] (0,2.5) -- (3,0);
    \node[below right] at (3,0) {$k^I\ell$};
    \draw[blue, thick, name path=blue_line] (0,3) -- (2.5,0);

    \path[name intersections={of=red_line and blue_line, by=line_intersect}];
    \path[name intersections={of=blue_line and rect_up, by=blue_up}];
    \path[name intersections={of=red_line and rect_right, by=red_right}];

    \fill[pattern=north west lines]
        (blue_up) -- (line_intersect) -- (red_right) -- (3,2) -- cycle;

    \node at (-0.5,2) {$r^I\ell$};
    \fill (line_intersect) circle (2pt);
    \node[left=3pt, below=2pt] at (line_intersect) {$(p,q)$};
\end{tikzpicture}
}
&
\LPtikz{%
\begin{tikzpicture}[scale=0.8]
    \draw[->] (-0.5,0) -- (4,0) node[right] {$x$};
    \draw[->] (0,-0.5) -- (0,3.5) node[left] {$y$};

    \draw (0,0) rectangle (3,2);
    \draw[name path=rect_left] (0,0) -- (0,2);
    \draw[name path=rect_right] (3,0) -- (3,2);
    \draw[name path=rect_up] (0,2) -- (3,2);
    \draw[name path=rect_down] (0,0) -- (3,0);

    \draw[red, thick, name path=red_line] (0,1.5) -- (3,0);
    \node[below right] at (3,0) {$k^I\ell$};
    \draw[blue, thick, name path=blue_line] (0,3) -- (2.5,0);

    \path[name intersections={of=red_line and blue_line, by=line_intersect}];
    \path[name intersections={of=blue_line and rect_up, by=blue_up}];
    \path[name intersections={of=red_line and rect_right, by=red_right}];

    \fill[pattern=north west lines]
        (blue_up) -- (line_intersect) -- (red_right) -- (3,2) -- cycle;

    \node at (-0.5,2) {$r^I\ell$};
    \fill (line_intersect) circle (2pt);
    \node[left=3pt] at (line_intersect) {$(p,q)$};
\end{tikzpicture}
}
&
\LPtikz{%
\begin{tikzpicture}[scale=0.8]
    \draw[->] (-0.5,0) -- (4,0) node[right] {$x$};
    \draw[->] (0,-0.5) -- (0,3.5) node[left] {$y$};

    \draw (0,0) rectangle (3,2);
    \draw[name path=rect_left] (0,0) -- (0,2);
    \draw[name path=rect_right] (3,0) -- (3,2);
    \draw[name path=rect_up] (0,2) -- (3,2);
    \draw[name path=rect_down] (0,0) -- (3,0);

    \draw[red, thick, name path=red_line] (0,1.3) -- (3,0);
    \node[below right] at (3,0) {$k^I\ell$};
    \draw[blue, thick, name path=blue_line] (0,1.8) -- (1.5,0);

    \path[name intersections={of=red_line and blue_line, by=line_intersect}];
    \path[name intersections={of=blue_line and rect_left, by=blue_left}];
    \path[name intersections={of=red_line and rect_right, by=red_right}];

    \fill[pattern=north west lines]
        (blue_left) -- (line_intersect) -- (red_right) -- (3,2) -- (0,2) -- cycle;

    \node at (-0.5,2) {$r^I\ell$};
    \fill (line_intersect) circle (2pt);
    \node[left=1.5pt, below=4pt] at (line_intersect) {$(p,q)$};
\end{tikzpicture}
}
\\[-0.2em]
\small (a) $\ki\le\ri$
&
\multicolumn{3}{c}{\small (b) $\ki\ge\ri$, $\lambda(\kf)^2-(\lambda-1)(\kf-\rf)\ri\ge0$}
\end{tabular}

\vspace{0.3em}

\begin{tabular}{@{}c@{\hspace{\LPgap}}c@{\hspace{\LPgap}}c@{}}

\LPtikz{%
\begin{tikzpicture}[scale=0.8]
    \draw[->] (-0.5,0) -- (4,0) node[right] {$x$};
    \draw[->] (0,-0.5) -- (0,3.5) node[left] {$y$};

    \draw (0,0) rectangle (3,2);
    \draw[name path=rect_left] (0,0) -- (0,2);
    \draw[name path=rect_right] (3,0) -- (3,2);
    \draw[name path=rect_up] (0,2) -- (3,2);
    \draw[name path=rect_down] (0,0) -- (3,0);

    \draw[red, thick, name path=red_line] (0,2.2) -- (3,0);
    \node[below right] at (3,0) {$k^I\ell$};
    \draw[blue, thick, name path=blue_line] (0,2.7) -- (2.8,0);

    \path[name intersections={of=red_line and blue_line, by=line_intersect}];
    \path[name intersections={of=blue_line and rect_up, by=blue_up}];
    \path[name intersections={of=red_line and rect_right, by=red_right}];

    \fill[pattern=north west lines]
        (blue_up) -- (line_intersect) -- (3,0) -- (3,2) -- cycle;

    \fill (blue_up) circle (2pt);
    \node at (-0.5,2) {\small$r^I\ell$};
\end{tikzpicture}
}
&
\LPtikz{%
\begin{tikzpicture}[scale=0.8]
    \draw[->] (-0.5,0) -- (4,0) node[right] {$x$};
    \draw[->] (0,-0.5) -- (0,3.5) node[left] {$y$};

    \draw (0,0) rectangle (3,2);
    \draw[name path=rect_left] (0,0) -- (0,2);
    \draw[name path=rect_right] (3,0) -- (3,2);
    \draw[name path=rect_up] (0,2) -- (3,2);
    \draw[name path=rect_down] (0,0) -- (3,0);

    \draw[red, thick, name path=red_line] (0,1.5) -- (3,0);
    \node[below right] at (3,0) {$k^I\ell$};
    \draw[blue, thick, name path=blue_line] (0,2.3) -- (2.5,0);

    \path[name intersections={of=red_line and blue_line, by=line_intersect}];
    \path[name intersections={of=blue_line and rect_up, by=blue_up}];
    \path[name intersections={of=red_line and rect_right, by=red_right}];

    \fill[pattern=north west lines]
        (blue_up) -- (line_intersect) -- (3,0) -- (3,2) -- cycle;

    \fill (blue_up) circle (2pt);
    \node at (-0.5,2) {\small$r^I\ell$};
\end{tikzpicture}
}
&
\LPtikz{%
\begin{tikzpicture}[scale=0.8]
    \draw[->] (-0.5,0) -- (4,0) node[right] {$x$};
    \draw[->] (0,-0.5) -- (0,3.5) node[left] {$y$};

    \draw (0,0) rectangle (3,2);
    \draw[name path=rect_left] (0,0) -- (0,2);
    \draw[name path=rect_right] (3,0) -- (3,2);
    \draw[name path=rect_up] (0,2) -- (3,2);
    \draw[name path=rect_down] (0,0) -- (3,0);

    \draw[red, thick, name path=red_line] (0,1.3) -- (3,0);
    \node[below right] at (3,0) {$k^I\ell$};
    \draw[blue, thick, name path=blue_line] (0,1.8) -- (1.5,0);

    \path[name intersections={of=red_line and blue_line, by=line_intersect}];
    \path[name intersections={of=blue_line and rect_left, by=blue_left}];
    \path[name intersections={of=red_line and rect_right, by=red_right}];

    \fill[pattern=north west lines]
        (blue_left) -- (line_intersect) -- (3,0) -- (3,2) -- (0,2) -- cycle;

    \fill (blue_left) circle (2pt);
    \node at (-0.5,2) {\small$r^I\ell$};
\end{tikzpicture}
}
\\[-0.2em]
\multicolumn{2}{c}{\small (c) $\ri\le(\lambda-1)\rf+\kf$, $\lambda(\kf)^2-(\lambda-1)(\kf-\rf)\ri\le0$}
&
\small (d) $\ki\ge\ri\ge(\lambda-1)\rf+\kf$
\end{tabular}

\vspace{-0.3em}

\caption{Feasible regions of the linear program. The red line corresponds to the boundary of the first constraint, and the blue line corresponds to the boundary of the second constraint.}
\label{fig:shaded_regions}
\end{figure*}
\section{Conclusion}\label{conclusion}
In this paper, we introduce a linear-algebraic framework for analyzing the conversion bandwidth of MDS convertible codes in the split regime. Using this, we  derive closed-form lower bounds on the read bandwidth during conversion.  The key insight is that the conversion imposes a subspace inclusion  relation between certain restricted columns of the generator matrices of the initial and final codes. This inclusion naturally leads to a set of linear programming constraints  whose optimal solution yields the desired lower bounds on bandwidth cost.

Next we compare our bounds  with the following bound proposed by  Maturana and  Rashmi  in ~\cite{Maturana2022BandwidthCO}:
\begin{equation}\label{thm_4}
    R \ge
\begin{cases}
\lambda \kf \ell-\ri\ell\max\{\frac{\kf}{\rf}-1,0\} & \text{if } \ri \le \lambda \rf,\\[2mm]
\lambda \min\{\rf,\kf\}  \ell, & \text{if } \ri > \lambda \rf.
\end{cases}
\end{equation}
\begin{itemize}
    \item For $r^F \ge k^F$, our bound in~\cref{bound0} equals $\ki\ell$, which matches the expression in~\eqref{thm_4}. This is tight by definition, as $\ki\ell$ corresponds exactly to the total number of message subsymbols.
    \item For $\ri\le r^F\le k^F$, our bound in~\cref{bound1} is $R \geq \lambda \kf \ell-\frac{(\kf-\rf)\ri}{\rf}\ell$. 
    This coincides with the expression in~\eqref{thm_4} and is known to be tight, as  it is achieved  by the construction of Maturana and Rashmi for this parameter range.~\cite{Maturana2022BandwidthCO}
    \item For $r^F \le r^I \le k^F$, our bound in~\cref{bound1} is \begin{equation*}
        R \geq \lambda\rf\ell \frac{(\lambda-1)\kf+\ri}{(\lambda-1)\rf+\ri},
    \end{equation*}
    while the bound in~\eqref{thm_4} is 
    \begin{equation*}
        R \ge
\begin{cases}
\lambda \kf \ell-\frac{(\kf-\rf)\ri}{\rf}\ell & \text{if } \ri \le \lambda \rf,\\[2mm]
\lambda \rf \ell, & \text{if } \ri > \lambda \rf.
\end{cases}
    \end{equation*}
If $\ri \le \lambda \rf$, we have
\begin{align*}
   &\left(\lambda \kf \ell-\frac{(\kf-\rf)\ri}{\rf}\ell\right)/\left(\lambda\rf\ell \frac{(\lambda-1)\kf+\ri}{(\lambda-1)\rf+\ri}\right)\\&=1-\frac{\ri(\kf-\rf)(\ri-\rf)}{\lambda(\rf)^2\left((\lambda-1)\kf+\ri\right)}\\&\le1,
\end{align*}
where the inequality holds since each multiplicative factor in the product is nonnegative. \\
If $\ri > \lambda \rf$, we have
\begin{align*}
 &(\lambda \rf \ell)/\left(\lambda\rf\ell \frac{(\lambda-1)\kf+\ri}{(\lambda-1)\rf+\ri}\right)\\
 &=\frac{(\lambda-1)\rf+\ri}{(\lambda-1)\kf+\ri}\le1.
\end{align*}
In total, our bound is better than  the bound in~\eqref{thm_4}. Moreover, the lower bound given in~\cref{bound1} with $r^F \le r^I \le k^F$ is attainable  by the construction of Maturana and Rashmi in ~\cite{Maturana2022BandwidthCO}, establishing its tightness for this range.

    \item  For $r^F < k^F\le \ki\le \ri$, our bound in~\cref{bound_3} is $ \lambda r^F\ell$, which  agrees with~\eqref{thm_4}.
    \item For $r^F < k^F\le \ri< \ki$, we have the following comparisons.
    \begin{enumerate}
        \item If $\ki > \ri$, $\lambda(\kf)^2-(\lambda-1)(\kf-\rf)\ri\ge 0$ and $\ri\ge\lambda \rf$, we have 
\begin{align*}
    &(\lambda \rf  \ell)/\left(\frac{(\kf)^2 \lambda^2 \rf}{\kf \ri - \rf \ri + \kf \lambda \rf}\ell\right)\\&=\frac{\ri(\kf-\rf)+\lambda\kf\rf   }{\lambda\kf(\kf-\rf)+\lambda\kf\rf}< 1.
\end{align*}
       \item If $\ki > \ri$, $\lambda(\kf)^2-(\lambda-1)(\kf-\rf)\ri\ge 0$ and $\ri<\lambda \rf$, we have
\begin{align*}
    &\frac{\lambda \kf \ell-\ri\ell\left(\frac{\kf}{\rf}-1\right)}{\displaystyle\frac{(\kf)^2 \lambda^2 \rf}{\kf \ri - \rf \ri + \kf \lambda \rf}\ell}\\
    &=\frac{(\lambda\kf\rf)^2-(\ri(\kf-\rf))^2}{(\lambda\kf\rf)^2}< 1.
\end{align*}
       \item If $\ri < (\lambda-1)\rf + \kf$, $\lambda(\kf)^2 - (\lambda-1)(\kf-\rf)\ri \le 0$ and $\ri\ge\lambda \rf$, we have 
\begin{align*}
   &\ri\ell+ 
\frac{ \lambda\kf[(\lambda-1)\rf-( \ri - \kf)]\ell }{(\lambda-1)\rf+\kf}\\
&=\ri\ell+\lambda\kf\left(1-\frac{\ri}{(\lambda-1)\rf+\kf}\right)\ell>\lambda \rf  \ell.
\end{align*}
     \item If $\ri < (\lambda-1)\rf + \kf$, $\lambda(\kf)^2 - (\lambda-1)(\kf-\rf)\ri \le 0$ and $\ri<\lambda \rf$, we have $ \frac{\lambda\kf}{(\lambda-1)\rf+\kf}<\frac{\kf}{\rf}$, and then
    \begin{align*}
   &\ri\ell+ 
\frac{ \lambda\kf[(\lambda-1)\rf-( \ri - \kf)]\ell }{(\lambda-1)\rf+\kf}\\&=\lambda\kf\ell-\ri\ell\left(\frac{\lambda\kf}{(\lambda-1)\rf+\kf}-1\right)\\&>\lambda\kf\ell-\ri\ell\left(\frac{\kf}{\rf}-1\right).
 \end{align*}
 \item If  $\ki> \ri\ge (\lambda-1)\rf+\kf$ and $\lambda(\kf)^2-(\lambda-1)(\kf-\rf)\ri\le0$, we have
    \begin{align*}
   \lambda\rf\ell/\left(\frac{(\lambda-1)\rf\ri\ell}{\ri-\kf}\right)=\frac{\lambda\ri-\ki}{(\lambda-1)\ri}<1.
 \end{align*}
    \end{enumerate}
   So the values of ~\eqref{thm_4} as above are strictly smaller than our bound, equivalently, our bound is strictly tighter.
\end{itemize}

In future work, we plan to develop explicit code constructions that achieve the lower bound established in~\cref{bound_3}.  
In Appendix~\ref{ppendix_A}, we provide a concrete example where the initial code $\mathcal{C}^I$ is an $[n^I=7,k^I=4,\ell=7]$ MDS array code and the final code $\mathcal{C}^F$ is an $[n^F=3,k^F=2,\ell=7]$ MDS array code.
In this example, the conversion downloads $\frac{\lambda^2(\kf)^2  \rf}{\kf \ri - \rf \ri + \lambda\kf  \rf} \ell = 16$ subsymbols, exactly matching the lower bound in~\cref{bound_3}.

 \appendix
 \section{An Example Achieving Our Bound in Theorem 3}\label{ppendix_A}
 We provide a concrete example demonstrating that the second case of the lower bound derived in~\cref{bound_3} is achievable.
 Let $\F=\mathbb{F}_{23}$. Consider the initial MDS array code
 $$\mathcal{C}^I:[\nii,\ki,\ell]=[7,4,7],$$
 with the generator matrix
 $$ \mathbf{G}^I=\begin{bmatrix}
   \mathbf{I}_{28}&\mathbf{B} 
 \end{bmatrix},$$
 where
      $$\mathbf{B}=\begin{bmatrix}
      \mathbf{B}_{1,1}&\mathbf{B}_{1,2}&\mathbf{B}_{1,3}\\
      \mathbf{B}_{2,1}&\mathbf{B}_{2,2}&\mathbf{B}_{2,3}\\
      \mathbf{B}_{3,1}&\mathbf{B}_{3,2}&\mathbf{B}_{3,3}\\
      \mathbf{B}_{4,1}&\mathbf{B}_{4,2}&\mathbf{B}_{4,3}
  \end{bmatrix},$$
  and
  $$\mathbf{B}_{1,1}=
\begin{bmatrix}
11 & 0 & 0 & 0 & 0 & 0 & 0\\
11 & 20 & 16 & 22 & 18 & 13 & 14\\
20 & 7 & 12 & 21 & 19 & 7 & 7\\
16 & 17 & 11 & 4 & 13 & 18 & 15\\
9 & 16 & 9 & 20 & 20 & 13 & 10\\
3 & 5 & 6 & 16 & 1 & 10 & 13\\
10 & 11 & 1 & 15 & 1 & 2 & 6
\end{bmatrix},
\quad \mathbf{B}_{1,2}=
\begin{bmatrix}
3 & 0 & 0 & 0 & 0 & 0 & 0\\
10 & 0 & 17 & 17 & 19 & 13 & 20\\
13 & 19 & 1 & 12 & 19 & 0 & 13\\
20 & 6 & 21 & 17 & 2 & 11 & 10\\
1 & 21 & 0 & 17 & 6 & 7 & 0\\
15 & 12 & 13 & 4 & 3 & 2 & 19\\
20 & 21 & 4 & 22 & 13 & 8 & 18
\end{bmatrix},$$
$$
\mathbf{B}_{1,3}=
\begin{bmatrix}
7 & 0 & 0 & 0 & 0 & 0 & 0\\
9 & 6 & 7 & 22 & 11 & 4 & 7\\
22 & 21 & 10 & 13 & 2 & 11 & 7\\
0 & 16 & 19 & 1 & 8 & 5 & 14\\
8 & 3 & 6 & 22 & 11 & 16 & 3\\
15 & 0 & 6 & 10 & 15 & 18 & 4\\
18 & 5 & 9 & 0 & 16 & 20 & 1
\end{bmatrix},\quad
\mathbf{B}_{2,1}=
\begin{bmatrix}
11 & 0 & 0 & 0 & 0 & 0 & 0\\
11 & 20 & 16 & 22 & 13 & 13 & 12\\
20 & 7 & 12 & 21 & 22 & 19 & 4\\
16 & 17 & 11 & 4 & 16 & 22 & 9\\
9 & 16 & 9 & 20 & 10 & 14 & 21\\
3 & 5 & 6 & 16 & 14 & 22 & 13\\
10 & 11 & 1 & 15 & 11 & 20 & 16
\end{bmatrix},$$
$$
\mathbf{B}_{2,2}=
\begin{bmatrix}
7 & 0 & 0 & 0 & 0 & 0 & 0\\
10 & 0 & 17 & 17 & 8 & 14 & 18\\
13 & 19 & 1 & 12 & 17 & 0 & 10\\
20 & 6 & 21 & 17 & 14 & 13 & 6\\
1 & 21 & 0 & 17 & 20 & 18 & 20\\
15 & 12 & 13 & 4 & 1 & 21 & 21\\
20 & 21 & 4 & 22 & 7 & 4 & 18
\end{bmatrix},
\quad
\mathbf{B}_{2,3}=
\begin{bmatrix}
13 & 0 & 0 & 0 & 0 & 0 & 0\\
9 & 6 & 7 & 22 & 13 & 18 & 17\\
22 & 21 & 10 & 13 & 4 & 15 & 21\\
0 & 16 & 19 & 1 & 15 & 5 & 8\\
8 & 3 & 6 & 22 & 20 & 9 & 8\\
15 & 0 & 6 & 10 & 18 & 21 & 9\\
18 & 5 & 9 & 0 & 9 & 1 & 8
\end{bmatrix},$$
$$
\mathbf{B}_{3,1}=
\begin{bmatrix}
12 & 0 & 0 & 0 & 0 & 0 & 0\\
17 & 0 & 19 & 7 & 8 & 21 & 9\\
5 & 1 & 19 & 3 & 9 & 9 & 14\\
12 & 4 & 5 & 9 & 13 & 15 & 4\\
18 & 4 & 3 & 11 & 13 & 10 & 12\\
5 & 20 & 18 & 12 & 17 & 11 & 2\\
16 & 7 & 13 & 14 & 13 & 4 & 15
\end{bmatrix},
\quad
\mathbf{B}_{3,2}=
\begin{bmatrix}
14 & 0 & 0 & 0 & 0 & 0 & 0\\
6 & 5 & 7 & 11 & 9 & 13 & 22\\
11 & 3 & 6 & 21 & 11 & 14 & 4\\
7 & 0 & 1 & 16 & 20 & 14 & 19\\
11 & 19 & 5 & 5 & 21 & 13 & 5\\
0 & 13 & 2 & 11 & 19 & 10 & 1\\
4 & 16 & 7 & 11 & 13 & 15 & 11
\end{bmatrix},$$
$$
\mathbf{B}_{3,3}=
\begin{bmatrix}
9 & 0 & 0 & 0 & 0 & 0 & 0\\
18 & 1 & 16 & 22 & 18 & 10 & 10\\
0 & 0 & 4 & 18 & 20 & 22 & 13\\
20 & 7 & 16 & 14 & 15 & 18 & 13\\
18 & 17 & 15 & 19 & 10 & 9 & 5\\
11 & 14 & 3 & 10 & 10 & 19 & 13\\
2 & 17 & 8 & 9 & 8 & 1 & 19
\end{bmatrix},
\quad
\mathbf{B}_{4,1}=
\begin{bmatrix}
6 & 0 & 0 & 0 & 0 & 0 & 0\\
17 & 0 & 19 & 7 & 12 & 0 & 2\\
5 & 1 & 19 & 3 & 18 & 12 & 12\\
12 & 4 & 5 & 9 & 9 & 14 & 16\\
18 & 4 & 3 & 11 & 14 & 5 & 4\\
5 & 20 & 18 & 12 & 3 & 13 & 15\\
16 & 7 & 13 & 14 & 5 & 11 & 4
\end{bmatrix},$$
$$
\mathbf{B}_{4,2}=
\begin{bmatrix}
9 & 0 & 0 & 0 & 0 & 0 & 0\\
6 & 5 & 7 & 11 & 10 & 1 & 10\\
11 & 3 & 6 & 21 & 4 & 8 & 2\\
7 & 0 & 1 & 16 & 9 & 17 & 7\\
11 & 19 & 5 & 5 & 3 & 9 & 6\\
0 & 13 & 2 & 11 & 4 & 5 & 11\\
4 & 16 & 7 & 11 & 14 & 15 & 22
\end{bmatrix},
\quad
\mathbf{B}_{4,3}=
\begin{bmatrix}
15 & 0 & 0 & 0 & 0 & 0 & 0\\
18 & 1 & 16 & 22 & 9 & 9 & 2\\
0 & 0 & 4 & 18 & 10 & 14 & 10\\
20 & 7 & 16 & 14 & 8 & 12 & 18\\
18 & 17 & 15 & 19 & 10 & 3 & 21\\
11 & 14 & 3 & 10 & 17 & 12 & 10\\
2 & 17 & 8 & 9 & 15 & 9 & 0
\end{bmatrix}.$$
 Let the final code be the MDS array code
 $$\mathcal{C}^F:[\nf,\kf,\ell]=[3,2,7],$$
 with the generator matrix
 $$\mathbf{G}^F=\begin{bmatrix}
     \mathbf{I}_{14}&\mathbf{C}
 \end{bmatrix}, \text{where}\ 
 \mathbf{C}=\begin{bmatrix}
      \mathbf{C}_{1,1}\\
      \mathbf{C}_{2,1}
  \end{bmatrix},$$
and \[\mathbf{C}_{1,1} =
\begin{bmatrix}
16 & 21 & 8 & 21 & 20 & 4 & 20\\
8 & 3 & 2 & 4 & 6 & 1 & 22\\
20 & 1 & 1 & 7 & 6 & 2 & 15\\
1 & 0 & 17 & 16 & 21 & 18 & 17\\
17 & 2 & 0 & 17 & 21 & 20 & 15\\
7 & 2 & 18 & 2 & 1 & 8 & 4\\
2 & 20 & 22 & 15 & 7 & 1 & 3
\end{bmatrix},
\quad
\mathbf{C}_{2,1} =
\begin{bmatrix}
1 & 2 & 5 & 16 & 16 & 20 & 5\\
8 & 3 & 2 & 4 & 6 & 1 & 22\\
20 & 1 & 1 & 7 & 6 & 2 & 15\\
1 & 0 & 17 & 16 & 21 & 18 & 17\\
17 & 2 & 0 & 17 & 21 & 20 & 15\\
7 & 2 & 18 & 2 & 1 & 8 & 4\\
2 & 20 & 22 & 15 & 7 & 1 & 3
\end{bmatrix}.\]Both $\mathcal{C}^I$ and $\mathcal{C}^F$ satisfy the MDS property.
 Define the read subsymbols as
 \begin{equation*}
    D_i =
\begin{cases}
\{1\}& \text{if}\quad i\in\{1,2,3,4\},\\[2mm]
\{1,2,3,4\} & \text{if}\quad i\in\{5,6,7\}.
\end{cases}
\end{equation*}
That is, we read one  subsymbol from each systematic symbol and  four subsymbols from each of the last three symbols during conversion. Under this configuration, one can verify that (\ref{character_eq}) in Lemma~\ref{character_lemma} holds. In particular, we have 
 \begin{equation*}
     \tilde{\mathbf{C}}=\tilde{\mathbf{B}}\mathbf{E},
 \end{equation*}
 where  $\mathbf{E}$ is
 {\setcounter{MaxMatrixCols}{20}
\[
\begingroup
\setlength{\arraycolsep}{3.5pt}
\begin{bmatrix}
0 & 5 & 11 & 1 & 11 & 8 & 17 & 3 & 22 & 6 & 3 & 13 & 15 & 3\\
19 & 14 & 5 & 2 & 4 & 8 & 10 & 0 & 18 & 18 & 0 & 21 & 18 & 19\\
22 & 12 & 16 & 0 & 20 & 21 & 15 & 5 & 13 & 3 & 10 & 17 & 0 & 3\\
11 & 16 & 18 & 4 & 9 & 15 & 12 & 6 & 22 & 6 & 14 & 10 & 3 & 20\\
9 & 9 & 18 & 7 & 3 & 18 & 5 & 15 & 20 & 6 & 15 & 12 & 4 & 22\\
15 & 16 & 21 & 0 & 16 & 18 & 9 & 12 & 22 & 14 & 0 & 20 & 19 & 7\\
10 & 2 & 2 & 9 & 8 & 18 & 6 & 22 & 9 & 13 & 16 & 4 & 14 & 19\\
20 & 8 & 22 & 16 & 22 & 13 & 4 & 19 & 20 & 22 & 21 & 16 & 10 & 19\\
1 & 5 & 16 & 0 & 1 & 13 & 11 & 22 & 3 & 22 & 13 & 13 & 12 & 15\\
4 & 0 & 13 & 8 & 16 & 0 & 0 & 6 & 7 & 4 & 2 & 5 & 17 & 20\\
20 & 12 & 2 & 5 & 5 & 21 & 19 & 18 & 15 & 6 & 22 & 13 & 0 & 6\\
14 & 0 & 16 & 0 & 13 & 21 & 10 & 3 & 22 & 7 & 22 & 5 & 7 & 17
\end{bmatrix}
\endgroup
\]
}
By Remark~\ref{compute_bandwidth}, there exists a conversion procedure $\mathcal{T}$ with read bandwidth cost  $16$ subsymbols. 
\bibliographystyle{IEEEtran}
\bibliography{ref}
\end{document}